
\documentclass{IEEEtran4PSCC}
\IEEEoverridecommandlockouts
\usepackage{cite}
\usepackage{amsmath,amssymb,amsfonts}
\usepackage{amsthm}
\usepackage{algorithmic}
\usepackage{graphicx}
\usepackage{textcomp}
\usepackage{color}
\usepackage{tikz}
\usepackage{mathtools}
\usepackage{pgfplots}
\usepackage{siunitx}
\usepackage{lipsum}
\usetikzlibrary{spy}
\usetikzlibrary{shapes,arrows}
\newtheorem{remark}{Remark}

\newtheorem{proposition}{Proposition}
\sisetup{per-mode=symbol}

\def\BibTeX{{\rm B\kern-.05em{\sc i\kern-.025em b}\kern-.08em
    T\kern-.1667em\lower.7ex\hbox{E}\kern-.125emX}}
\begin{document}
\title{On the damping of power systems during inverter-based blackstart 
	\thanks{A.Anta and D.Cifelli are with AIT Austrian Institute of Technology GmbH, Vienna, Austria, email: \{Adolfo.Anta, Diego.Cifelli\}@ait.ac.at.}
}
\author{Adolfo~Anta, Diego~Cifelli}

\maketitle

\begin{abstract}
The massive deployment of inverter-based generation poses several challenges to system operators but also offers new opportunities. In the context of grid recovery, inverter-based generation is expected to take over new responsibilities, as conventional generation that are blackstart capable are being decommissioned.  Inverter-based facilities, being very flexible and controllable, have the ability to contribute during grid restoration, although it is still unclear how to deal with the limited ratings of batteries, PV, etc. and how to address nonlinear effects such as inrush currents or harmonics. In this direction, the requirements of grid-forming inverters to comply with black start capabilities are still to be defined. During the first steps of a grid restoration process the dynamics of the grid largely vary from the grid behaviour under normal operation. In particular, the grid is nearly unloaded and therefore poorly damped. Indeed, overvoltages and resonances are a common concern during the energization of long transmission lines. In this work we analyse the spectral properties of unloaded transmission lines, and describe the impact of voltage controllers on the damping of the system. As a byproduct, this analysis can be leveraged to propose high level requirements for grid-forming devices, in order to provide adequate damping during the early stages of a grid restoration process. 
\end{abstract}

\begin{IEEEkeywords}
Grid restoration; grid-forming control; resonances; eigenvalue analysis; Toeplitz matrices.
\end{IEEEkeywords}

\section{Introduction and motivation}
The introduction of inverter-based generation at large scale is causing issues in the planning and operation of power grids. At the same time, it raises the opportunity to upgrade and optimize standard approaches using inverter-based resources, as inverters are much more flexible and controllable than conventional generation. This transition is also affecting ongoing strategies for grid restoration. Inverter-based generation have the ability to contribute during the black start~\cite{noris2019power,ng_restart,jain2020blackstart}, although it is still unclear how these inverters should behave, and how system-level strategies may need to be adapted.

There has been considerable recent work at the unit level to assess the black start capabilities of inverter-based generation. However, when it comes to system-level analysis, grid restoration has been widely studied from a static perspective, with a focus on designing topologies for each step of the blackstart. Even in the few cases where dynamic aspects are considered, the focus lies on grids dominated by conventional power plants, thus modelled using swing equations and ignoring line dynamics. In this work we focus instead on the dynamic aspects during the early stages of a grid restoration under inverter-based generation, and in particular we analyse the damping (or lack thereof) during the energization of transmission lines by an inverter-based unit. 

In terms of strategies, grid restoration typically follows a divide-and-conquer approach: a grid is divided in several regions that are independently restored by black-start capable power plants. This approach creates many regional constraints and faces the issue of resynchronization among the regions. A faster and simpler strategy consists in defining a backbone structure, as extensive as possible, that is energized at once. This approach is being tested already by system operators with promising results~\cite{aniceto2023towards}.  The backbone energization is performed by means of a single unit, typically using a so-called soft start (a ramp in voltage magnitude), that minimizes certain adverse effects such as inrush currents~\cite{jain2021blackstart}. During a backbone energization, the lack of loads connected in the grid implies a very low level of damping, and resonances and overvoltages are common and can jeopardize an adequate black start. This is especially the case under the presence of transformers, whose energization can lead to a significant harmonic content. In traditional black start approaches, loads may be connected in order to avoid these issues, given that conventional power plants are characterized by large ratings. Due to the limited ratings of inverters, it is likely that all loads need to be disconnected in order to allow the inverters to energize the grid. Moreover, the expected resonances are different from the ones at normal operation~\cite{saldana2001methodology} (and partially unknown due to the uncertainties during black start), hence designing damping controls targeting a particular frequency may be ineffective. In particular, the first resonant frequency can be lower than during normal operation, and therefore more detrimental~\cite{emin2014transformer}.

This paper addresses these issues by analysing the dynamics during the early stages of a black start, considering transmission lines using $\pi$-sections and simple models of grid-forming units. Unlike in previous works, we do not rely on numerical studies but rather develop an analytical framework to study the impact of the grid and different control parameters on the system damping. We first derive the dynamics of a transmission line under no load, and its similarities with tridiagonal 2-Toeplitz matrices are pointed out. These matrices have very special characteristics, and expressions for the eigenvalues can be analytically computed. Using these expressions, we later study the influence of certain control structures on the energizing units, where a clear trade-off between reference tracking and damping can be observed. Moreover, we show how adding virtual resistance loops can contribute to the overall damping of the system. We hope this analysis contributes towards the definition of high level specifications for black start functionality from grid-forming units. Indeed, customized black start control modes may be required during the energization. Most of the focus on grid-forming research has centered around frequency and angle dynamics, while for this problem the voltage magnitude is highly pertinent. 

The rest of the paper is structured as follows: first, we model a transmission line using $\pi$-sections, to be energized by controllable voltage sources, and construct a state space representation of the line. Next, we establish the connection between our dynamic model and the well-studied field of Toeplitz matrices and Chebyshev polynomials. Using this parallelism, we derive a set of nonlinear equations implicitly defining the eigenvalues of the system for different cases: in open loop, under a proportional voltage controller, and using a virtual resistance. Finally, an example shows the impact of the controller parameters on the damping of the system. We conclude the paper enumerating a list of possible future steps to better understand the grid dynamics during this early stage of the grid restoration, and thereby defining possible control structures for black-start capable grid-forming converters.










\begin{figure*}[h]
\includegraphics[width=\linewidth]{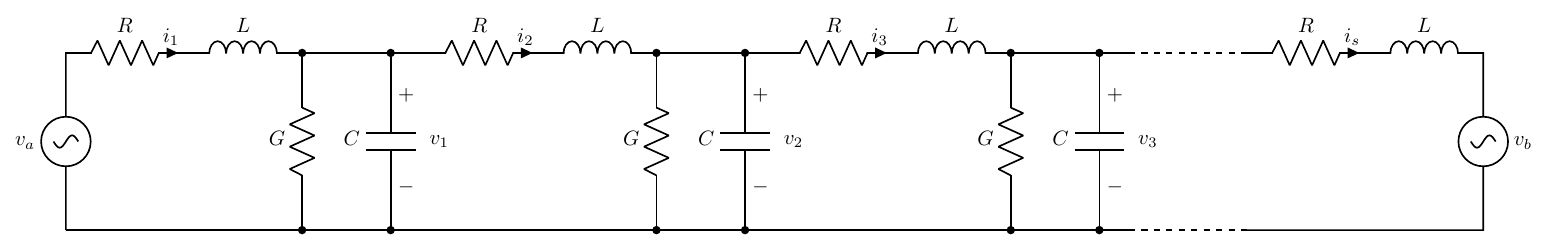}
\caption{Transmission line model with cascaded $\pi$ networks}
\label{fig:PiModel}
\end{figure*}

\section{Modelling}
We consider a transmission line (see Figure~\ref{fig:PiModel}) representing the grid backbone, to be modelled by means of a cascaded connection of $\pi$-sections. Unlike distributed parameter line models, these sections allow us to represent the dynamics through a linear state space model and a finite number of states~\cite{macias2005comparison}. For simplicity, a single phase system is selected. At both ends of the line voltage sources, representing grid-forming devices, are responsible for the energization of the line. Writing one differential equation per dynamic element (capacitors and inductances) leads us to a standard linear representation of the type $\dot x = A x +B u$ such as in~\eqref{eq:modelPi}, with $R$, $G$, $L$ and $C$ representing the resistance, conductance inductance and capacitance per $\pi$-section, $i$ and $v$ denoting the current and the voltage for each branch and node, and $v_a$, $v_b$ the voltage sources. Non-mentioned entries in the $A$ matrix in~\eqref{eq:modelPi} are equal to zero. 

The $A$ matrix in the state space representation is not only a tridiagonal matrix, but in particular it is consistent with the structure of a 2-Toeplitz matrix, for which eigenvalues can be analytically computed, as explained in the next section. Grid dynamics have been widely studied, but in general with the focus on analyzing stability properties~\cite{dorfler2018electrical}. At the same time, damping in power grids has also been previously explored~\cite{mallada2011improving,borsche2015effects}, but ignoring line dynamics and focusing on the dynamics of conventional generation. It has been conjectured as well that resonant frequencies and amplitudes depend on the short circuit power, the degree of meshing and amount of connected loads. However, a theoretical understanding of these design choices is missing. Moreover, it is unclear whether this intuition still holds when we replace synchronous machines with power converters.


\begin{figure*}[h]
\begin{align}
\frac{d}{dt} \left[ \begin{array}{c}
i_1  \\ v_1 \\ i_2 \\ \vdots \\ v_{s-1} \\ i_s   
\end{array}\right]&=
\left[\begin{array}{rrrrrr}
-R/L & -1/L &        &        &        &\\
1/C  & -G/C & -1/C   &        &        &\\
     &  1/L & -R/L   & -1/L   &        &\\
     &      & \ddots & \ddots & \ddots &\\
     &      &        & 1/C    &  -G/C  & -1/C\\
     &      &        &        &   1/L  &  -R/L
\end{array}\right] 
\left[ \begin{array}{c}
i_1  \\ v_1 \\ i_2 \\ \vdots \\ v_{s-1} \\ i_s   
\end{array}\right] +
\left[\begin{array}{cccccc}
1/L &  0\\
0  & 0\\
0  & 0\\
\vdots & \vdots\\
0  & 0\\
0 &   -1/L
\end{array}\right] 
\left[ \begin{array}{c} v_a  \\ v_b   
\end{array}\right]
\label{eq:modelPi}
\end{align}
\end{figure*}
The controllable voltage sources $v_a, v_b$ track a certain voltage reference, either a step or a ramp. We start with a simple proportional controller, for instance:
\begin{equation}
    v_a(t) = k_V (v_{ref_A}-v_1), \:\:    v_b(t) = k_V (v_{ref_B}-v_{s-1}) \label{eq:propCtrlVoltage}
\end{equation}
where $k_V>0$ is the proportional gain and $v_1, v_{s-1}$ are the voltages at their local node, which is the typical measurement of the inverter at the point of common coupling. The reference voltage $v_{ref}$ for the energizing unit(s) during black start is typically selected to be a ramp, in order to minimize the impact of resonances, overvoltages and inrush phenomena~\cite{jain2021blackstart}.  A proportional-integral (PI) voltage controller may not be adequate since it may demand unnecessary reactive power, and perfect tracking is not necessary. Likewise, there is typically no clear reactive power reference to be used for the voltage regulation scheme in~\eqref{eq:propCtrlVoltage}, since the the reactive power demand is unclear during black start. High control gains would be prefered to guarantee adequate tracking, but this may reduce overall damping, as we will see in the next section. Moreover, should this soft energization be carried out by several units simultaneously, it is necessary to avoid having a very aggressive regulation at each unit that may lead to large power flows between the energizing units. 

While in some cases the ramp may take in the order of seconds~\cite{aniceto2023towards}, it is also common to slow down the ramps to take up to several minutes due to concerns of possible resonances. Slowing down has two drawbacks: first, protections may be triggered as the voltage rises very slowly. Second, it slows down the whole grid restoration process. It is also unclear how slow the ramp needs to be in order to avoid resonances. Ideally, we would find improved methods to tackle possible resonances, especially given the flexibility and controllability of power converters.

\begin{remark}
In this paper we model the energizing unit as an ideal, controllable voltage source, since the focus of this work is not on the internal controls of grid-forming units but rather to describe the expected high level behaviour of the grid-forming unit during grid restoration, with the goal of defining a possible high level specification for black start capable inverters. This is typically a valid assumption if the AC and DC sides of the converter are fairly decoupled~\cite{tayyebi2022system}.
\end{remark}

\section{Spectral analysis and analytical characterization}

In order to analyze the eigenvalues of the system, we draw in this section parallelisms between the state space representation in~\eqref{eq:modelPi} and Toeplitz matrices. This helps us deriving analytical expressions for the spectrum and its dependence on the voltage source and its controller gains. 

\subsection{An overview to 2-Toeplitz matrices and Chebyshev polynomials}

A matrix $A$ is tridiagonal if $a_{ij}=0$ whenever $\vert i-j\vert>1$. Moreover, $A$ is r-Toeplitz of order $n$ if $a_{i+r,j+r}=a_{ij}$, for $i,j=1,2,...n-r$~\cite{gover1994eigenproblem}. With this definition, it can be seen that the state space matrix $A$ in~\eqref{eq:modelPi} is a 2-Toeplitz matrix. The inverse and spectra of these matrices have been widely studied in mathematics, and has many applications in different topics such as wave dispersion models or fiber optic design~\cite{al2017pole,bastawrous2022closed}. In this section we briefly summarize existing results on $2$-Toeplitz matrices that will be helpful in computing analytical expressions for the eigenvalues of the $A$ matrix in~\eqref{eq:modelPi}. 

The characteristic polynomial of these matrices can be written as a function of Chebyshev polynomials. We first define the Chebyshev polynomials of second kind, that satisfy the following 3-point recurrent relationship: 
\begin{equation}
    U_{n+1}(\lambda) := \lambda U_{n}(\lambda) -U_{n-1}(\lambda),\qquad \forall n=1,2,\hdots \:\: \lambda \in\mathbb{C}
\end{equation}
with initial conditions $U_0(\lambda)=1$ and $U_1(\lambda)=2\lambda$. Moreover, $U_n$ satisfies:
\begin{equation}
    U_{n}(\lambda) = \frac{\sin ((n+1)\theta)}{\sin (\theta)}, \text{with } \lambda=\cos(\theta), \: 0\leq\theta < \pi \label{eq:trigDefChebyshev}
\end{equation}
For convenience we now define the following functions:
\begin{eqnarray}    
g(\lambda) &:=& \left(\lambda+\frac{R}{L}\right) \left(\lambda+\frac{G}{C}\right)     \label{eq:defG}\\
P_n(\lambda) &:=&\frac{1}{\left(LC\right)^{n}} U_n\left(\frac{1}{2}LC\lambda+1\right) \label{eq:defP}
\end{eqnarray}
With these definitions, we can already state that the eigenvalues of the $A$ matrix in~\eqref{eq:modelPi} correspond to the solution of the characteristic polynomial
\begin{equation}
    \Delta_{2n+1}(\lambda) := \left(\lambda+\frac{R}{L}\right)P_n(g(\lambda))=0\label{eq:charPol_odd}
\end{equation}
if the dimensions of the $A$ matrix are odd, and 
\begin{equation}
    \Delta_{2n}(\lambda) := P_n(g(\lambda))-\frac{1}{LC} P_{n-1}(g(\lambda))=0 \label{eq:charPol_even}
\end{equation}
if the dimensions of $A$ are even, see~\cite{gover1994eigenproblem,da2007characteristic} for a detailed explanation\footnote{Notice that, unlike in~\cite{da2007characteristic}, in our case the upper diagonal elements are negative. Furthermore, in our $A$ matrix the upper diagonal is identical to the lower diagonal with opposite signs, although this does not seem to lead to a major simplification in the eigenvalue computation.}.

\subsection{Open Loop}
\label{sec:openLoop}
In the case of our transmission line in~\eqref{fig:PiModel}, the dimensions of the state space representation are $2n+1$, where $n$ is the number of $\pi$-sections. The equation in~\eqref{eq:charPol_odd} lead us to two nonlinear equations, using the definition in~\eqref{eq:trigDefChebyshev} and identifying the argument inside the Chebyshev polynomial $U_n$ in~\eqref{eq:defP} to $\cos(\theta)$:
\begin{eqnarray}
    \left(\lambda+\frac{R}{L}\right)\frac{1}{(LC)^n}\frac{\sin((n+1)\theta)}{\sin(\theta)} = 0\label{eq:1stEqEvals}\\
    \cos(\theta) = \frac{1}{2}LC\left(\lambda+\frac{R}{L}\right)\left(\lambda+\frac{G}{C}\right)+1 \label{eq:2ndEqEvals}    
\end{eqnarray}
Equation~\eqref{eq:2ndEqEvals} is quadratic in $\lambda$, so it can be solved as a function of $\cos(\theta)$. At the same time,~\eqref{eq:1stEqEvals} defines the conditions for $\theta$ plus another solution for $\lambda$, leading to the following eigenvalues:
\begin{gather} 
\lambda_0 = \frac{-R}{L},\label{lambdaReal}\\
\lambda_{\pm k} = -\frac{1}{2}\left(\frac{R}{L}+\frac{G}{C}\right)\pm\sqrt{\frac{1}{4}\left(\frac{R}{L}-\frac{G}{C}\right)^2 - \frac{2}{LC}\left(1- \cos(\theta)\right)}\notag\\ \label{lambdaComplex}
\end{gather} 
where $\theta=\frac{k \pi}{n+1}, k=1,2\hdots n$. Notice that~\eqref{eq:1stEqEvals} is well defined for $\theta=0$, and in particular $\lambda=\lambda_0$ corresponds to $\theta=0$ in~\eqref{eq:2ndEqEvals}.

A few insights can be drawn from this formula. First, in the case of a single $\pi$-section and negligible values of $R$ and $G$, the convoluted equation in~\eqref{lambdaComplex} simplifies to the well known expression of $\lambda_{\pm 1}= 1/\sqrt{LC}$ for the complex eigenvalues, as expected. Moreover, it can be easily seen how, for $n>1$, this term $1/\sqrt{LC}$ can be fairly inaccurate in describing the natural frequency of the eigenvalues~\cite{leterme2021use}. For the typical values of $R$, $L$, $G$, $C$ and $n$ the eigenvalues $\lambda_{\pm k}$ are always complex, with the imaginary part being much larger than the real part, hence the poor damping of the unloaded system. The imaginary part is always bounded by $2/\sqrt{LC}$, given that the term $2/LC$ inside the square root is much larger than the other terms. However, this bound is not very meaningful since we are mainly interested in the first resonances, as the higher modes are rarely excited. Moreover, if we increase the number of $\pi$-sections to model a line (but correspondingly reducing the values of $R$,$L$, $G$ and $C$ per section), the real part stays constant but the imaginary part of the eigenvalues increases proportional to the number of sections, thereby modelling a higher frequency range in a more accurate way. This is expected and consistent with the understanding of $\pi$-section modelling. 


\subsection{The effect of voltage control}
\label{sec:voltageControl}
We consider now the impact of the voltage regulator on the eigenvalues and the damping of the grid dynamics. As mentioned before, relatively high gains may be needed in order to properly track the voltage reference. For simplicity, we consider only the voltage source $v_b$, although by symmetry the effect of the other unit will be identical. Closing the loop by means of the voltage regulator in~\eqref{eq:propCtrlVoltage} entails modifying the $(2n+1,2n)$ element of the state space matrix $A$, from $1/L$ to $1/L+k_V/L$. The state space matrix is no longer Toeplitz but nonetheless we can derive the characteristic polynomial by computing the determinant of $\lambda I -A$ along the last row using Laplace expansion:
\begin{align}
    \Delta_{2n+1}(\lambda) =& \left(\lambda+\frac{R}{L}\right)\Delta_{2n}(\lambda)+\left(\frac{1}{L}+\frac{k_V}{L}\right)\frac{1}{C}\Delta_{2n-1}(\lambda)\label{eq:charPol_kV1}\notag\\
=& \left(\lambda+\frac{R}{L}\right)(P_n(g(\lambda))-\frac{1}{LC}P_{n-1}(g(\lambda))) \notag\\
&+\left(\frac{1}{L}+\frac{k_V}{L}\right)\frac{1}{C} \left(\lambda+\frac{R}{L}\right) P_{n-1}(g(\lambda))   \notag \\
=& \left(\lambda+\frac{R}{L}\right) \left(P_n(g(\lambda))+\left(\frac{k_V}{LC}\right) P_{n-1}(g(\lambda))\right)    
\end{align}
where we have used the expressions in~\eqref{eq:charPol_even} and~\eqref{eq:charPol_odd} for odd and even order. As in the open loop case in Section~\ref{sec:openLoop}, these leads to the following two nonlinear equations:
\begin{gather}
    \left(\lambda+\frac{R}{L}\right)\frac{\sin((n+1)\theta)+k_V \sin(n\theta)}{\sin(\theta)}  = 0\label{eq:1stEqEvalsVolt}\\
    \cos(\theta) = \frac{1}{2}LC\left(\lambda+\frac{R}{L}\right)\left(\lambda+\frac{G}{C}\right)+1 \label{eq:2ndEqEvalsVolt}   
\end{gather}
where~\eqref{eq:2ndEqEvalsVolt} is identical to~\eqref{eq:2ndEqEvals}. As before, the eigenvalues can be computed as a function of the angle $\theta$ using~\eqref{eq:2ndEqEvalsVolt}, reaching to the same expressions as in~\eqref{lambdaReal},~\eqref{lambdaComplex}, but where now the angle satisfies the following trigonometric equation:
\begin{equation}
\label{eq:kV_trig}
    \sin((n+1)\theta)+k_V \sin(n\theta)=0    
\end{equation}
Notice that again $\theta=0$ does not necessarily satisfy~\eqref{eq:1stEqEvalsVolt}, and thus $k_V$ affects all complex eigenvalues. However, the unique real eigenvalue is not affected by the control gain and satisfies~\eqref{eq:1stEqEvalsVolt} for $\lambda=-R/L$ and~\eqref{eq:2ndEqEvalsVolt} with $\theta=0$. We can clearly see that $k_V$ only affects the angle $\theta$, and therefore, under typical values of $L$, $C$, $G$, $R$ and $n$, it does not influence the real component but only the imaginary part of the eigenvalues in~\eqref{lambdaComplex}. Hence, it can be already hinted that the proportional control influences negatively the damping of the system. Even though the relationship between the controller gain and $\theta$ is not trivial, we can actually prove that the term $1-\cos(\theta)$ is monotonically non-decreasing with respect to $k_V$, as stated in the following proposition:

\begin{proposition}
\label{pro1}
Consider $\theta(k_V)$ as implicitly defined by~\eqref{eq:kV_trig}. Then,
\begin{equation}
     \frac{d}{d k_V} (1-\cos(\theta))\geq 0,\quad \forall \:\theta \in[0,\pi[
\end{equation}
\end{proposition}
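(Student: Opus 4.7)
The plan is to apply the implicit function theorem to $F(\theta, k_V) := \sin((n+1)\theta) + k_V \sin(n\theta)$ and combine it with the chain rule to get
\[
\frac{d}{dk_V}(1-\cos\theta) \;=\; \sin\theta \cdot \frac{d\theta}{dk_V} \;=\; -\frac{\sin\theta \, \sin(n\theta)}{(n+1)\cos((n+1)\theta) + k_V\, n \cos(n\theta)}.
\]
Since $\sin(n\theta)$ is not sign-definite on $[0,\pi[$, the sign cannot be read off directly, so the next step is to eliminate $k_V$ using the defining relation $k_V\sin(n\theta) = -\sin((n+1)\theta)$.

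After substituting and placing over the common factor $\sin(n\theta)$, the denominator becomes proportional to $(n+1)\cos((n+1)\theta)\sin(n\theta) - n\sin((n+1)\theta)\cos(n\theta)$. The key computation is to apply the product-to-sum identities $\sin A\cos B = \tfrac{1}{2}[\sin(A{+}B)+\sin(A{-}B)]$ and $\cos A\sin B = \tfrac{1}{2}[\sin(A{+}B)-\sin(A{-}B)]$ with $A=(n+1)\theta$, $B=n\theta$: the $\sin((2n+1)\theta)$ terms combine with coefficient $(n+1)-n = 1$, and the $\sin\theta$ terms combine with coefficient $-(n+1)-n = -(2n+1)$, collapsing the whole quantity to $\tfrac{1}{2}\bigl[\sin((2n+1)\theta) - (2n+1)\sin\theta\bigr]$. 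Reassembling, I expect the identity
\[
\frac{d}{dk_V}(1-\cos\theta) \;=\; \frac{2\sin\theta \, \sin^2(n\theta)}{(2n+1)\sin\theta - \sin((2n+1)\theta)}.
\]

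In this form the claim is essentially obvious on $(0,\pi)$: the numerator is a product of non-negative factors since $\sin\theta\geq 0$, while the denominator is non-negative by the classical inequality $|\sin(m\phi)| \leq m|\sin\phi|$ for $m\in\mathbb{N}$ (a one-line induction on $m$ via the sine addition formula and $|\cos\phi|\leq 1$), applied with $m=2n+1$. The endpoint $\theta = 0$ yields a $0/0$ form that I would handle separately by a third-order Taylor expansion, giving a finite positive limit $3n/\bigl((2n+1)(n+1)\bigr)$. The main obstacle is to recognize that, once $k_V$ is eliminated, the seemingly intractable denominator collapses through these two product-to-sum identities to precisely the expression governed by the classical sine-multiple bound; everything else is bookkeeping.
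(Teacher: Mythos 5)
Your proof is correct and follows essentially the same route as the paper: implicit differentiation of $\sin((n+1)\theta)+k_V\sin(n\theta)=0$, elimination of $k_V$ to obtain the quotient $-\sin^2(n\theta)\big/\bigl[(n+1)\cos((n+1)\theta)\sin(n\theta)-n\cos(n\theta)\sin((n+1)\theta)\bigr]$, and a sign analysis of numerator and denominator, with the endpoint $\theta=0$ handled by a limit argument. The one place where you add value is that you carry out the step the paper ``omits for conciseness'': your product-to-sum collapse of the denominator to $\tfrac{1}{2}\bigl[\sin((2n+1)\theta)-(2n+1)\sin\theta\bigr]$, combined with $|\sin(m\phi)|\leq m|\sin\phi|$, is exactly the kind of identity the authors allude to but do not display.
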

\begin{proof}
We take as starting point the implicit relationship between $\theta$ and the gain $k_V$ in~\eqref{eq:kV_trig}, and differentiate with respect to $k_V$
\begin{align*}
&\frac{d}{dk_V} \left( \sin((n+1)\theta)+k_V \sin(n\theta)\right) =  \\
&\cos((n+1)\theta)(n+1)\frac{d\theta}{dk_V}+\sin(n\theta)+k_v \cos(n\theta) n \frac{d\theta}{dk_V} =0
\end{align*}
By replacing $k_V$ using again~\eqref{eq:kV_trig} we obtain a closed expression for $d\theta / dk_V$:
\begin{align}
&\frac{d\theta}{dk_V} =\label{eq:dtheta}\\
&\frac{-\sin(n\theta)^2}{(n+1)\cos((n+1)\theta)\sin(n\theta) - n\cos(n\theta)\sin((n+1)\theta)}\notag
\end{align}
The numerator is clearly non-positive, and the denominator can be shown to be always non-positive using trigonometric identities and the fact that $\sin(n\theta)\leq n\sin(\theta)$ (omitted here for conciseness). Moreover, using L'H\^{o}pital's rule it can be shown that~\eqref{eq:dtheta} is well defined for $\theta\in[0,\pi[$, since the the denominator is only 0 at $\theta=0$. Therefore, $d\theta / dk_V\geq 0$. Finally, we compute the dependence of $1-\cos(\theta)$ as a function of the controller gain:
\begin{equation*}
   \frac{d}{dk_V}  (1-\cos(\theta)) = \sin(\theta) \frac{d\theta}{dk_V}    
\end{equation*}
which is also greater or equal than 0 in the domain of $\theta$, see~\eqref{eq:trigDefChebyshev}.
\end{proof}

\begin{remark}
    The zeros of $\frac{d}{d k_V} (1-\cos(\theta))$ correspond to $\theta=0$ and $\sin(n\theta)=0$. In both cases, it can be seen that these values of $\theta$ do not satisfy~\eqref{eq:1stEqEvalsVolt} and~\eqref{eq:2ndEqEvalsVolt} for the complex eigenvalues, and therefore for all values of $\theta$ of interest the derivative is strictly positive.
\end{remark}
Given that the square root is a monotonic function, it can be stated that increasing $k_V$ exclusively increases the imaginary part of the complex eigenvalues, that is:
\begin{eqnarray}
    \frac{\partial}{\partial k_V} \mathfrak{Im}\{\lambda\}>0,\qquad  \frac{\partial}{\partial k_V} \mathfrak{Re}\{\lambda\}=0
\end{eqnarray}
This result clearly shows the existing trade-off between damping and tracking error. 
\begin{remark}
\label{remarkConjecture}
    An explicit solution for~\eqref{eq:kV_trig} can be easily found for $k_V=1$. In that case, using simple trigonometric identities it is straightforward to see that:
    \begin{equation}
    \sin((n+1)\theta) + \sin(n\theta) = 2 \sin \left( \left(n+\frac{1}{2} \right) \theta \right) \cos\left(\frac{\theta}{2}\right) = 0
    \end{equation}
    Hence, $\theta=\frac{k\pi}{n+1/2}$, for $k=1,2,\hdots n$ is a solution, which is indeed a larger angle than in the open loop, as expected from the previous discussion. For the general case, it has been numerically observed that $\theta=\frac{k\pi}{n+1/(1+k_V)}$ represents a fairly accurate approximation. Interestingly, this implies that, for large values of $k_V$, the eigenvalues converge to the open loop eigenvalues of a model with $n-1$ $\pi$-sections. An explicit expression for the angle of the form $\theta=\frac{k\pi}{n+1/g(k_V)}$ (for some monotonically increasing function $g$) may not exist but will be the focus of future research. 
\end{remark}



\subsection{The effect of virtual resistance and other variables}
The approach in the previous subsection can be applied to other grid parameters and other control loops. For instance, given that the proportional controller decreases the (already low) damping of the system, a black start unit could somehow incorporate another loop that compensates the effect of the voltage feedback loop, using available local measurements. In particular, the voltage control can be expanded as:
\begin{equation}
\label{eq:virtual_resistance}
    v_b = k_V (v_{ref}-v_{s-1}) + k_I i_{s}     
\end{equation}
where the converter also measures the current at its terminal\footnote{The virtual resistance appears as a positive feedback loop in our mathematical formulation because of the sign convention used for currents in our modelling.} and modifies the voltage accordingly. Intuitively speaking, this can be seen as a virtual resistance loop, which has been already suggested for black start purposes in~\cite{jain2022virtual}, although the focus was on reducing inrush and sympathetic currents. From an algebraic perspective, this increases the magnitude of one of the diagonal elements and therefore it is expected to improve the overall damping. In this case, the virtual resistance modifies the last diagonal term in the $A$ matrix (that is, the $(2n+1,2n+1)$ term), from $-R/L$ to $-R/L-k_I/L$. 

For simplicity, we analyze here just the effect of $k_I$. The combined effect of $k_V$ and $k_I$ can also be computed using similar strategies. Following a similar approach as in Section~\ref{sec:voltageControl}, the characteristic polynomial under the presence of a virtual resistance can also be derived as before:
\begin{align}
    \Delta_{2n+1}(\lambda) =& \left(\lambda+\frac{R}{L}+\frac{k_I}{L}\right)\Delta_{2n}(\lambda)+\left(\frac{1}{LC}\right)\Delta_{2n-1}(\lambda)
    \notag\\
=& \left(\lambda+\frac{R}{L}+\frac{k_I}{L}\right)\left(P_n(g(\lambda))-\frac{1}{LC}P_{n-1}(g(\lambda))\right) \notag\\
&+\left(\frac{1}{LC}\right) \left(\lambda+\frac{R}{L}\right) P_{n-1}(g(\lambda))   \notag \\
=& \left(\lambda+\frac{R+k_I}{L}\right) P_n(g(\lambda))-\frac{k_I}{L^2C} P_{n-1}(g(\lambda))  \label{eq:charPolCurr}
\end{align}
Similar expressions can be found in\cite{da2007characteristic}. From this characteristic polynomial, we can again derive one nonlinear equations of the form:
\begin{equation}
    \left(\lambda+\frac{R}{L}+\frac{k_I}{L}\right)\sin((n+1)\theta)-\frac{k_I}{L}\sin(n\theta)=0 \label{eq:1stEqEvalsCurr}
\end{equation}
plus the equation for $\cos(\theta)$ as in~\eqref{eq:2ndEqEvalsVolt}. In this case the equations are not partially decoupled as before, since the control gain $k_I$ does not affect exclusively the angle $\theta$, so the derivation becomes more cumbersome. We focus on the complex eigenvalues, which are the ones of interest in order to determine the damping in the system. It can be seen in~\eqref{eq:1stEqEvalsCurr} that the corresponding $\theta$ needs to be also complex for~\eqref{eq:1stEqEvalsCurr} to hold. The fact that the angle is complex already hints that changes in the angle will now affect the real part of the eigenvalues. 

To proceed, we can carry out a series expansion for~\eqref{eq:1stEqEvalsCurr}, given that the function is also analytic in the complex plane. It can be seen as a function of $3$ variables, namely $\theta$, $k_I$ and $\lambda$ evaluated at the operating point corresponding to $k_{I_*}=0$, with the corresponding values for $\theta_*$ and $\lambda_*$ defined in Section~\ref{sec:openLoop}:

\begin{small}
\begin{align}
    &F_1 = \left(\lambda+\frac{R}{L}+\frac{k_I}{L}\right)\sin((n+1)\theta)-\frac{k_I}{L}\sin(n\theta)\notag\\ 
    & \approx F_1(\lambda_*,k_{I_*},\theta_*) + \frac{\partial F_1}{\partial \theta}\Bigr|_{\substack{\theta_*\\\lambda_*\\k_{I_*}}} \Delta \theta 
    + \frac{\partial F_1}{\partial k_I}\Bigr|_{\substack{\theta_*\\\lambda_*\\k_{I_*}}} \Delta k_I + 
    \frac{\partial F_1}{\partial \lambda}\Bigr|_{\substack{\theta_*\\\lambda_*\\k_{I_*}}} \Delta \lambda =\notag\\ 
    &  \left(\left(\lambda_*+\frac{R}{L}+\frac{k_{I_*}}{L}\right)(n+1)\cos((n+1)\theta_*) -\frac{k_{I_*}}{L}n\cos(n\theta_*)\right)\Delta \theta \notag\\
    &+\frac{1}{L}\left(\sin((n+1)\theta_*)-\sin(n\theta_*)\right)\Delta k_I 
    + \sin((n+1)\theta_*)\Delta \lambda = 0
\end{align}
\end{small}
Taking into account that the operating point satisfies $F_1=0$ and therefore $\sin((n+1)\theta_*=0$, we can reach a simple relationship between variations in the control gain $k_I$ and the angle $\theta$:
\begin{equation}
    \Delta \theta = -\frac{1}{L} \frac{\sin\left(\frac{n}{n+1}\pi k\right)}{(n+1)\left(\lambda_* + \frac{R}{L}\right)} \Delta k_I
\end{equation}
for $k=1,2,\hdots n$. Given that $\lambda_*$ is mainly imaginary, and the rest of the terms are real, it can be seen that a variation in the control gain leads mainly to a change in the imaginary part of $\theta$. Moreover, the sensitivity is always positive, that is, the imaginary magnitude of the angle always increases with $\Delta k_I$. Likewise, we can now look at the relationship between the angle and the eigenvalues from~\eqref{eq:2ndEqEvals}:
\begin{align}
    &F_2 = \cos(\theta) - \frac{1}{2}LC\left(\lambda+\frac{R}{L}\right)\left(\lambda+\frac{G}{C}\right) -1 \notag\\
    & \approx F_2(\lambda_*,\theta_*) + \frac{\partial F_2}{\partial \theta}\Bigr|_{\substack{\theta_*\\\lambda_*}} \Delta \theta +
    \frac{\partial F_2}{\partial \lambda}\Bigr|_{\substack{\theta_*\\\lambda_*}} \Delta \lambda \notag\\ 
    & = -\sin(\theta_*)\Delta \theta - LC \left(\lambda_* +\frac{1}{2}\left(\frac{G}{C}+\frac{R}{L}\right)\right)\Delta \lambda  = 0
\end{align}
Plugging the value for $\lambda_*$ it can be easily seen that an increase of the imaginary value in $\Delta \theta$ leads to an increase in the real magnitude of $\Delta \lambda$, and therefore the control gain positively affects the damping of the system. Unlike in Section~\ref{sec:voltageControl}, this is a local result that does not hold for extremely large values of $k_I$. This local result is nonetheless of interest, given the expected range of control gain values. 

The derived results for $k_V$ and  $k_I$ hint at two clear statements for the behaviour of the grid-forming unit energizing the grid: large values of $k_V$ improve the tracking accuracy of the voltage reference at the cost of reducing the damping of the resonances, but this effect can be compensated by increasing the values of $k_I$.




Before finalizing this section, we would like to emphasize the potential of this framework to analyze the effect of other design choices or parameters. For instance, we can consider a control strategy of the voltage source $v_b$ that can also read the current at the other end of the line $i_1$. This can be seen as a communication between two voltage sources attempting at energizing the same transmission line, in order to improve coordination. In this case, the element $(2n+1,1)$ is no longer zero but instead $k_X/L$, for some control gain $k_X\geq 0$. Using the same formulation as before, the characteristic polynomial of the system under this controller turns out to be:
\begin{align*}
    &\Delta_{2n+1}(\lambda) := \left(\lambda+\frac{R}{L}\right) P_n(g(\lambda)) +  \frac{k_X}{L}\frac{1}{(LC)^n}  \notag\\
    &=\left(\lambda+\frac{R}{L}\right) \frac{1}{(LC)^n} \frac{\sin((n+1)\theta)}{\sin(\theta)} + \frac{k_X}{L} \frac{1}{(LC)^n} = 0
\end{align*}
This expression can be further analyze the derive the influence of $k_X$ on the real part of the eigenvalues. For instance, it is straightforward to see that $-R/L$ is no longer an eigenvalue of the system.

Likewise, using this framework the impact of a virtual reactance can be analysed. The effect of equipping the controller with a virtual reactance is somewhat similar to the combined effect of a voltage control and a current control, as expected from the closed loop matrix structure, hence increasing the real and the imaginary components of the eigenvalues. Other actions that imply minor changes in the matrix structure have also been analysed using this framework, for instance topological changes. Preliminary results show that adding loops to a radial topology does not decrease the damping of the first resonance in the grid.

\section{An academic example}
To illustrate our results, we consider a \SI{100}{km} transmission line with R = $0.02$ \si{\Omega / km}, L = $0.5 \cdot 10^{-3}$ \si{H/km}, C = $0.4 \cdot 10^{-6}$ \si{F/km} and G = $0$ \si{S/km}. These values refers to a typical \SI{110}{kV} transmission line. The line is divided in 6 $\pi$-section of equal length. Therefore, the system has $13$ states variables.

We first evaluate the root locus of the system with two voltage sources at each extreme of the line. For ease of explanation, only $v_b$ is equipped first with a voltage controller and then with a current loop. Fig. \ref{fig:rootLocus_kv} and \ref{fig:rootLocus_ki} show the evolution of the eigenvalues as a function of the proportional gain $k_V$ and the virtual resistance $k_I$ respectively. As expected from the theoretical results above, independently of the values of $k_V$ and $k_I$, the system has a pure real eigenvalue and 5 complex eigenvalues pairs. The increase of $k_V$ increases the imaginary part, thereby reducing the damping. On the other hand, the increase of $k_I$ shifts all the eigenvalue to the left increasing the damping. It is important to emphasize the different sensitivities of each eigenvalue: in the case of $k_V$, the smallest eigenvalues (the most relevant ones for our application) do not vary vastly as a function of the control gain. This could also be deducted from the expressions in Remark~\ref{remarkConjecture} and its conjectured explicit dependence on $k_V$: for $n=6$ and $k=1$ (the first complex eigenvalue), the effect of $k_V$ on the angle $\theta$ is minimal. In the case of $k_I$, the intermediate eigenvalues exhibit larger sensitivity than the other ones. This is however dependent on the line parameters.
\begin{figure}[h]
\includegraphics[width=\columnwidth]{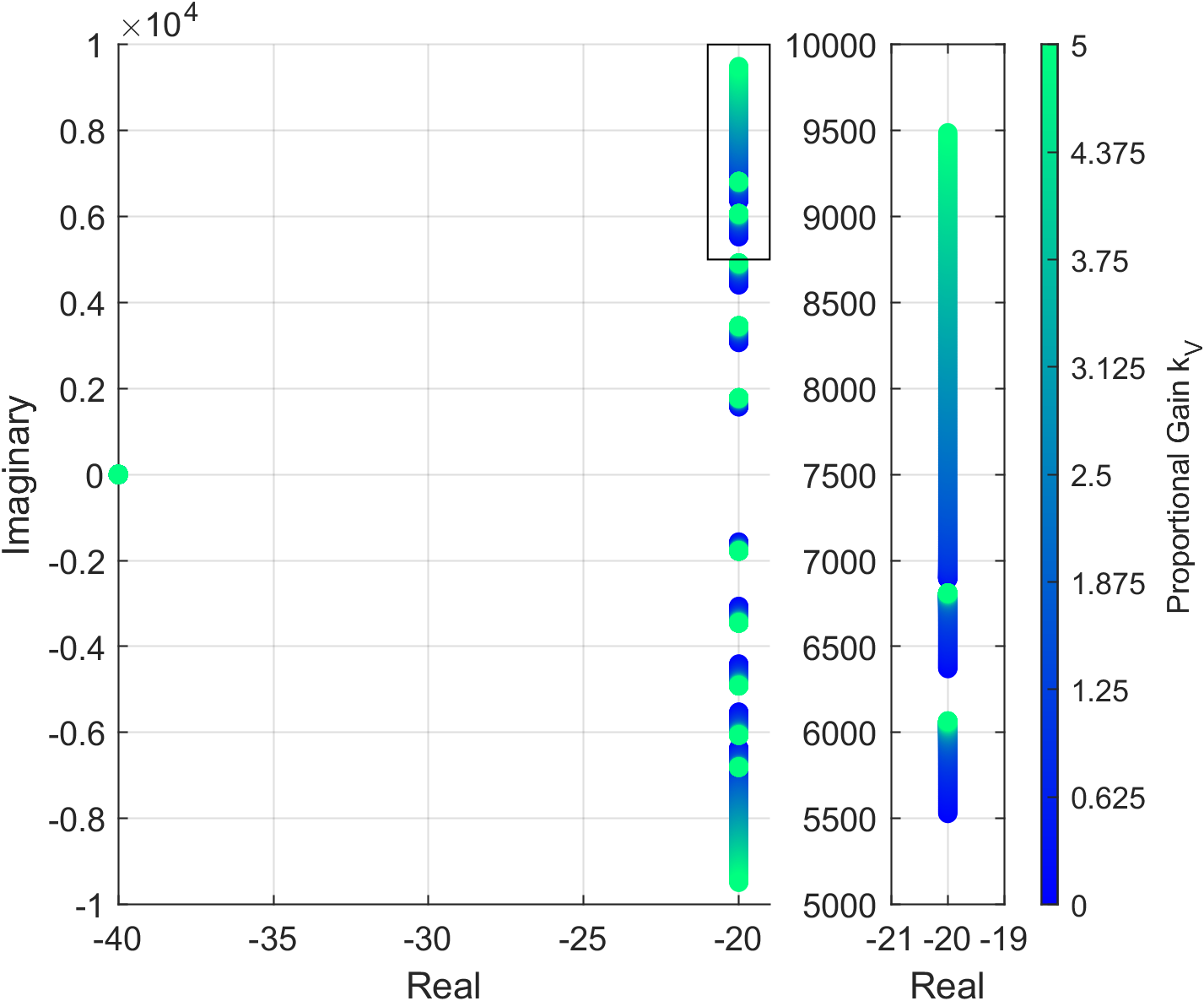}
\caption{ Root locus of~\eqref{eq:modelPi} as a function of $k_V$}
\label{fig:rootLocus_kv}
\end{figure}

\begin{figure}[h]
\includegraphics[width=\columnwidth]{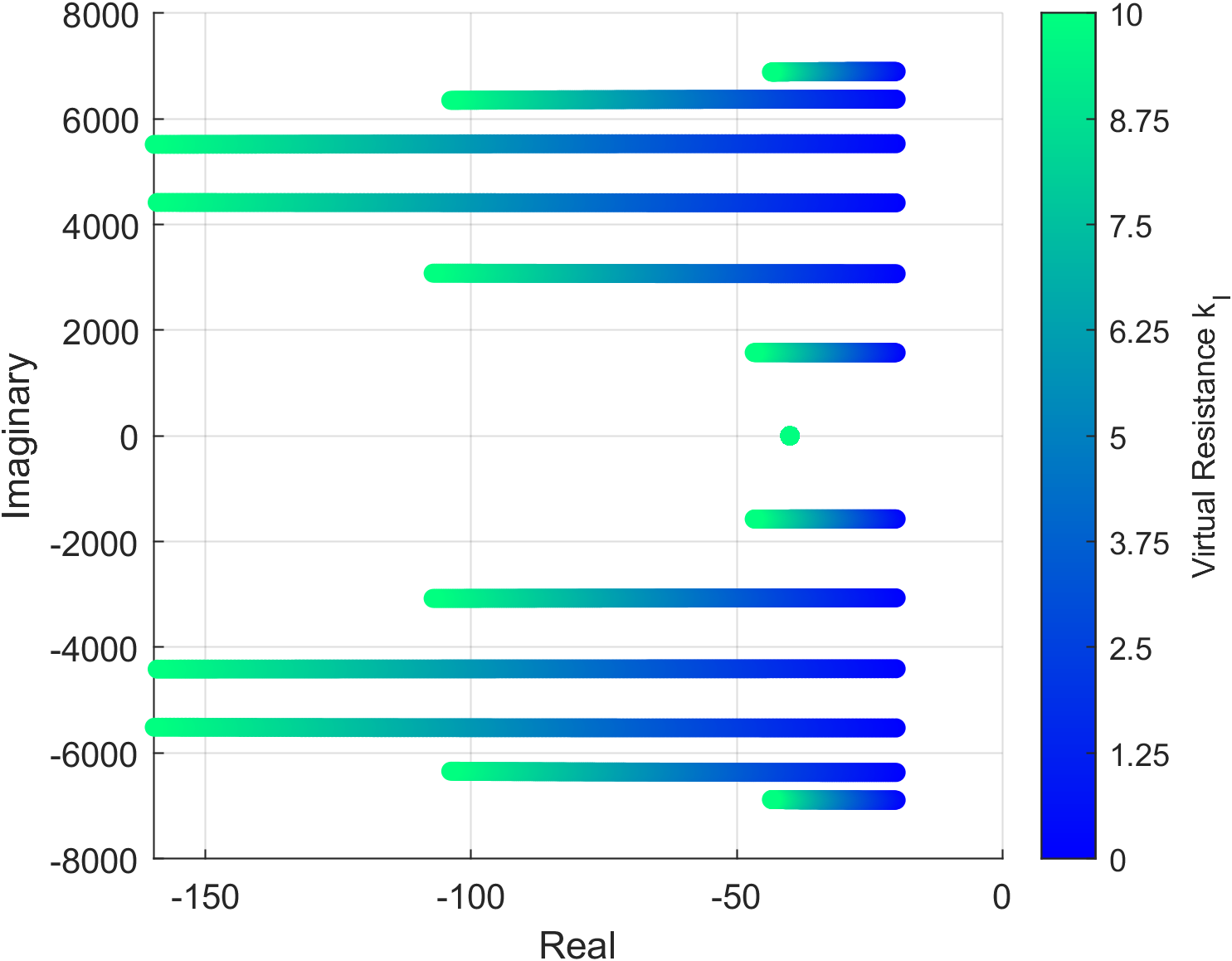}
\caption{ Root locus of~\eqref{eq:modelPi} as a function of $k_I$}
\label{fig:rootLocus_ki}
\end{figure}

Similar effects can be seen looking at the frequency response of the system. Figure~\ref{fig:bode} displays the frequency response between the voltage source and the voltage in the intermediate node $v_3$. While $k_V$ does not reduce the damping, it does shift the resonances towards higher frequencies. On the other hand, the effect of $k_I$ is to reduce the resonances in the system.

We also show in Fig.~\ref{fig:resonances} the time evolution of the voltage at the third node $v_3$ under different control gains. The reference for both voltage sources is a \SI{50}{Hz} sine wave\footnote{The voltage sources $v_a$ and $v_b$ are assumed to be perfectly synchronized, given that our modelling does not include a synchronization algorith.}. The controlled voltage source $v_b$ injects a \SI{250}{Hz} component, of a magnitude equal to $2\%$ of the main \SI{50}{Hz} component. In the open loop case, the \SI{250}{Hz} component is clearly amplified, leading to a worse harmonic distortion. The presence of the voltage controller leads to a lower distortion, but not because of improved damping, but rather because the resonances are moved up. Nonetheless, this strategy of shifting the frequencies cannot reliably avoid resonances, given the general uncertainties in the location of resonances and harmonics injection. On the other hand, $k_I$ reduces the distortion by overall increasing the damping, leading to a more robust strategy. 

Finally, we analyze the impact of the virtual resistance in the energization of a grid. We consider the well-known 9-bus system, where one inverter energizes the unloaded grid. Each line is $100$km long and is modelled using $3$ $\pi$-sections, with the same line parameters as before. Fig.~\ref{fig:energization_9bus} shows the current at one node when the energization is carried out at $t=0.1s$ in open loop (that is, the inverter generates a fixed voltage profile without feedback) and under the presence of a voltage controller and a virtual resistance as in~\eqref{eq:virtual_resistance}, with $k_V=10$ and $k_I=100$. The inverter not only injects the fundamental 50Hz component but also odd harmonics up to the $9^\text{th}$ component, with an amplitude of 1\% of the main component. Not only we see a large overshoot in current at the beginning, but also in steady state the undamped resonances lead to larger steady state current values. This implies either an unnecessary overdimensioning of the inverter, or the need to reduce the initial grid topology to be energized by the inverter. Similar behaviour has been observed in other nodes across the grid and in the voltage waveforms, leading to dangerous overvoltages that jeopardize the black start of the system. The time response of the system with only the voltage controller active is very similar to the open loop response and thus has been omitted. 




\begin{figure}[h]
\includegraphics[width=\columnwidth]{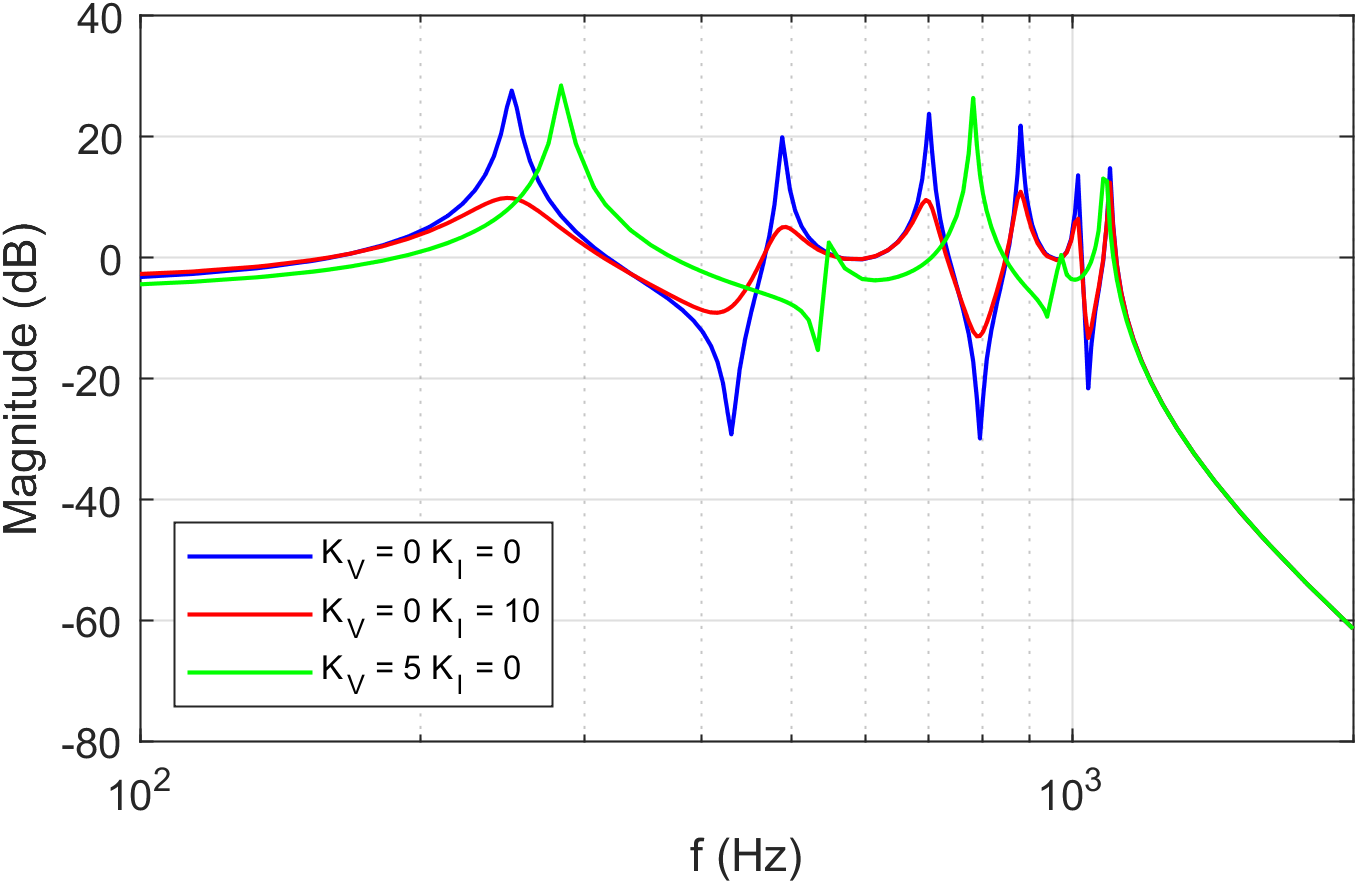}
\caption{Effect of the controller gains on the frequency response}
\label{fig:bode}
\end{figure}





\begin{figure}[h!]
\includegraphics[width=\columnwidth]{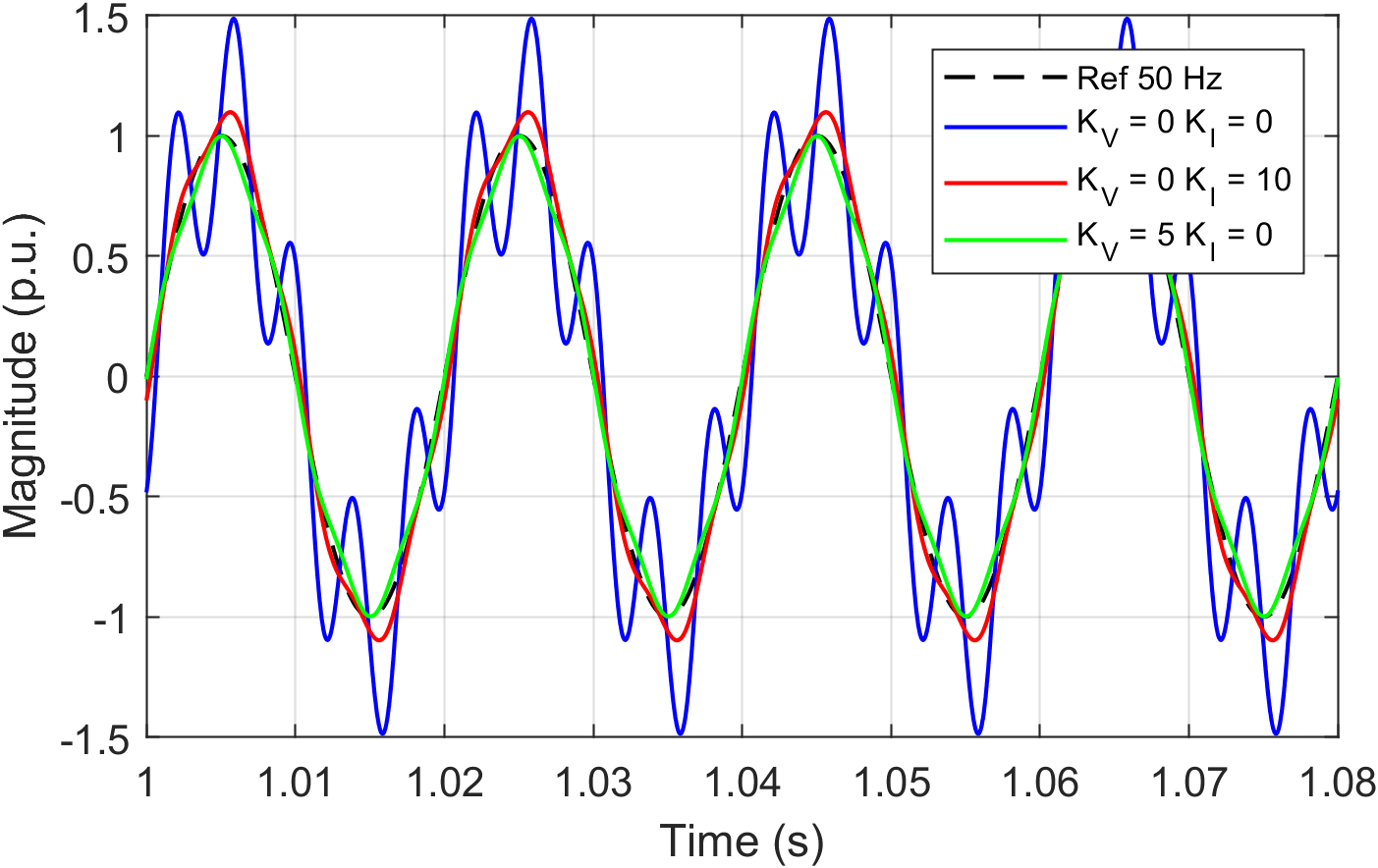}
\caption{ Time evolution of the voltage under harmonic injection}
\label{fig:resonances}
\end{figure}

\begin{figure}[h!]
\centering
\includegraphics[width=1\columnwidth]{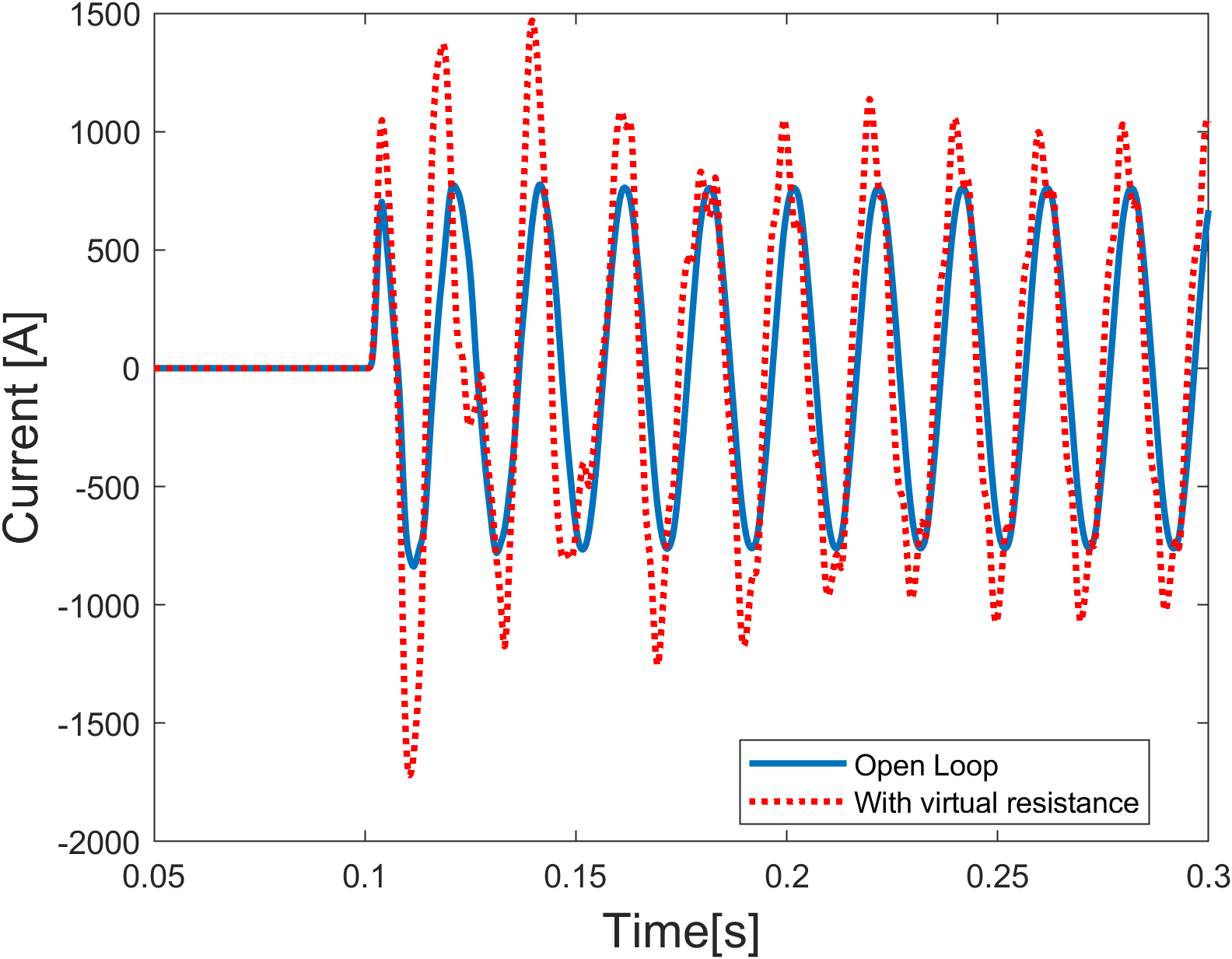}
\caption{Energization of the 9-bus system with and without virtual resistance}
\label{fig:energization_9bus}
\end{figure}

\section{Outlook}
In this paper we have explored the energization of a long transmission line during black start. In particular, the damping properties and eigenvalues of the system under different controllers have been analyzed. We hope that the employed framework, drawing parallelisms with Toeplitz matrices and Chebyshev polynomials, can be further utilized to understand the impact of other parameters and design choices, such as the output filters of the energizing filters (which are known to create resonances at low frequencies), the  network topology, the impact of a virtual reactance, the location of loads~\cite{cpes2024load} or the presence of transformers at the end of the lines, for instance. The end goal of this work is twofold: first, help defining a set of requirements for black-start capable grid-forming inverters; second, construct a set of design rules for a successful black start strategy relying on a backbone energization.

\nocite{dorfler2018electrical,yueh2008explicit,bastawrous2022closed,da2020comments,alvarez2005some,nikkila2020fast}

\bibliographystyle{IEEEtran}
\bibliography{resonancesBlackstart_bib}

\end{document}